\setlist[enumerate,1]{label=(\roman*)}
\title{Perov's Contraction Principle and Dynamic Programming with Stochastic Discounting}
\author{Alexis Akira Toda\thanks{Department of Economics, University of California San Diego. Email: \href{mailto:atoda@ucsd.edu}{atoda@ucsd.edu}.}}
\newcommand{\cC}{\mathcal{C}}
\begin{document}

\maketitle

\begin{abstract}
This paper shows the usefulness of Perov's contraction principle, which generalizes Banach's contraction principle to a vector-valued metric, for studying dynamic programming problems in which the discount factor can be stochastic. The discounting condition $\beta<1$ is replaced by $\rho(B)<1$, where $B$ is an appropriate nonnegative matrix and $\rho$ denotes the spectral radius. Blackwell's sufficient condition is also generalized in this setting. Applications to asset pricing and optimal savings are discussed.

\medskip

{\bf Keywords:} contraction, dynamic programming, spectral radius, vector-valued metric.


\end{abstract}

\section{Introduction}

Common dynamic programming problems seek to maximize the expected present discounted value of payoffs
\begin{equation}
\E \sum_{t=0}^\infty \beta^tu(x_t,y_t),\label{eq:PDV}
\end{equation}
where $\E$ is the expectation operator, $\beta\in [0,1)$ is the discount factor, $u$ is the flow utility function, and $(x_t,y_t)$ are state and control variables at time $t$. The standard mathematical tools for solving such problems are Banach's contraction mapping theorem and \citeauthor{blackwell1965}'s sufficient conditions \cite{blackwell1965}; see, for instance, \cite{bertsekas-shreve1978,StokeyLucas1989,bertsekas2017dynamic} for some textbook treatment. A key assumption for applying the contraction mapping theorem to dynamic programming is that the discount factor is bounded above by a number strictly less than 1.

Many recent works in economics, however, consider dynamic programming problems with state-dependent discounting that seek to maximize
\begin{equation}
\E \sum_{t=0}^\infty \left(\prod_{s=0}^t \beta_s\right)u(x_t,y_t), \label{eq:PDV_sd}
\end{equation}
where $\beta_t\ge 0$ is the discount factor from time $t-1$ to $t$ (which can be stochastic) and we normalize $\beta_0\equiv 1$; see \cite{krusell-smith1998,Toda2019JME,Cao2020} for several examples and \cite{StachurskiZhang2021} for a review of such models. Note that \eqref{eq:PDV} is a special case of \eqref{eq:PDV_sd} when the discount factor $\beta_t\equiv \beta<1$ is constant. Importantly, in this class of models, it could be $\beta_t\ge 1$ with positive probability, which makes the standard contraction mapping argument inapplicable. Even if the discount factor is constant and strictly less than 1, similar issues arise in some models with ``rate-of-return risk'' \cite{MaStachurskiToda2020JET,MaToda2021JET}; Section \ref{subsec:os} discusses such an example.

This paper shows that, by replacing the (scalar-valued) metric and Banach's contraction mapping theorem by a vector-valued metric and the generalized contraction principle of \citeauthor{Perov1964} \cite{Perov1964}, the standard proof technique becomes applicable to study dynamic programming problems with stochastic discounting. I also present a simple extension of \citeauthor{blackwell1965}'s sufficient conditions in this setting. To illustrate the usefulness of \citeauthor{Perov1964}'s contraction principle, I apply it to an abstract dynamic programming problem with stochastic discounting, an asset pricing model, and an optimal savings problem with rate-of-return risk.

We say that a space $X$ equipped with a vector-valued metric $d:X\times X\to \R_+^I$ is a generalized metric space if $d$ satisfies nonnegativity, symmetry, and entry-wise triangle inequality. Similarly, a generalized contraction can be defined by replacing the modulus $\beta<1$ by a nonnegative matrix $B$ with spectral radius less than 1. \citeauthor{Perov1964} \cite{Perov1964} proved a generalization of the contraction mapping theorem in this setting to study the existence of a solution to a system of ordinary differential equations. See \cite{Zabrejko1997,FilipPetrusel2010} for some reviews and extensions and \cite{Precup2009} for an application to semilinear operator systems.

Reference \cite{StachurskiZhang2021} develops a general theory of dynamic programming with state-dependent discounting by assuming what the authors call the ``eventually discounting'' condition. This condition allows them to show that some iterate of the Bellman operator is a contraction mapping and recover existence and uniqueness results. The approach presented in this paper significantly simplifies the argument in \cite{StachurskiZhang2021}. Instead of showing that some iterate of the Bellman operator is a contraction, my approach clarifies under what conditions a self map behaves like a contraction. For simplicity and readability, this paper only discusses dynamic programming problems with finite state spaces. However, similar results should hold in infinite state spaces by applying the generalization of Perov's contraction mapping theorem discussed in \cite{Zabrejko1997}.

\section{Vector-valued metric and Perov's contraction principle}

Recall that if $(X,d)$ is a complete metric space, a self map $T:X\to X$ is called a contraction mapping with modulus $\beta\in [0,1)$ if
\begin{equation}
d(Tx,Ty)\le \beta d(x,y)\quad \text{for all $x,y\in X$}.\label{eq:contraction}
\end{equation}
The well-known Banach's contraction mapping theorem states that
\begin{enumerate*}
\item a contraction mapping $T$ has a unique fixed point $x^*\in X$ and
\item starting from any initial value $x^0$, the sequence $x^n\coloneqq T^nx^0$ obtained by iterating $T$ converges to $x^*$ at rate $\beta^n$.
\end{enumerate*}

The contraction mapping theorem can be generalized for a vector-valued metric. Let $X$ be a set, $I\in \N$, and $d:X\times X\to \R_+^I$. We say that $d$ is a \emph{vector-valued metric} if the following conditions hold:
\begin{enumerate}
\item\label{item:nonnegative} (Nonnegativity) $d(x,y)\ge 0$ for all $x,y\in X$, with $d(x,y)=0$ if and only if $x=y$,
\item\label{item:symmetry} (Symmetry) $d(x,y)=d(y,x)$ for all $x,y\in X$,
\item\label{item:triangle} (Triangle inequality) $d(x,z)\le d(x,y)+d(y,z)$ for all $x,y,z\in X$.
\end{enumerate}
In condition \ref{item:triangle}, note that for $a=(a_1,\dots,a_I)\in \R^I$ and $b=(b_1,\dots,b_I)\in \R^I$, we write $a\le b$ if and only if $a_i\le b_i$ for all $i$. A set $X$ endowed with a vector-valued metric is called a \emph{vector-valued metric space}.

Let $\norm{\cdot}$ denote the supremum norm on $\R^I$, so $\norm{a}=\max_i\abs{a_i}$ for $a=(a_1,\dots,a_I)\in \R^I$. Note that the supremum norm satisfies the following monotonicity property: if $a,b\in \R^I$ and $0\le a\le b$, then $\norm{a}=\max_i a_i\le \max_i b_i=\norm{b}$. The monotonicity will be repeatedly used in the subsequent discussion. If $(X,d)$ is a vector-valued metric space and we define $\norm{d}:X\times X\to \R_+$ by $\norm{d}(x,y)=\norm{d(x,y)}=\max_i d_i(x,y)$, then $(X,\norm{d})$ is a metric space in the usual sense. To see this, conditions \ref{item:nonnegative} and \ref{item:symmetry} are trivial, and condition \ref{item:triangle} holds because
\begin{align*}
\norm{d}(x,z)&=\norm{d(x,z)}\le \norm{d(x,y)+d(y,z)}\\
&\le \norm{d(x,y)}+\norm{d(y,z)}=\norm{d}(x,y)+\norm{d}(y,z),
\end{align*}
where the first inequality uses condition \ref{item:triangle} for $d$ and the monotonicity of the supremum norm $\norm{\cdot}$. We say that the vector-valued metric space $(X,d)$ is \emph{complete} if the metric space $(X,\norm{d})$ is complete.

We now extend the contraction mapping theorem to vector-valued metric spaces. Below, let $\norm{\cdot}$ also denote the operator norm for $I\times I$ matrices induced by the supremum norm, so $\norm{A}=\sup_{v\in \R^I\backslash\set{0}}\norm{Av}/\norm{v}$. Recall that for a square matrix $A$, the \emph{spectral radius}, denoted by $\rho(A)$, is defined by the largest absolute value of all eigenvalues:
\begin{equation*}
\rho(A)\coloneqq \max\set{\abs{\alpha}:\text{$\alpha$ is an eigenvalue of $A$}}.
\end{equation*}
We introduce the following definition.

\begin{defn}\label{defn:gen_contraction}
Let $(X,d)$ be a vector-valued metric space. A self map $T:X\to X$ is a \emph{generalized contraction} with coefficient matrix $B\ge 0$ if $\rho(B)<1$ and
\begin{equation}
d(Tx,Ty)\le Bd(x,y)\quad \text{for all $x,y\in X$}.\label{eq:contraction_gen}
\end{equation}
\end{defn}

Clearly, \eqref{eq:contraction} is a special case of \eqref{eq:contraction_gen} when $I=1$ and $B=\beta$. We have the following result.

\begin{thm}[\citeauthor{Perov1964}'s Contraction Mapping Theorem \cite{Perov1964}]\label{thm:gen_contraction}
Let $(X,d)$ be a complete vector-valued metric space and $T:X\to X$ be a generalized contraction with coefficient matrix $B=(b_{ij})$ with spectral radius $\rho(B)<1$. Then
\begin{enumerate}
\item $T:X\to X$ has a unique fixed point $x^*\in X$,
\item For any $x^0\in X$, we have $T^nx^0\to x^*$ as $n\to\infty$,
\item For any $\beta\in (\rho(B),1)$, the approximation error $d(T^nx^0,x^*)$ is $O(\beta^n)$.
\end{enumerate}
\end{thm}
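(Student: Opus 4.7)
The plan is to mimic the standard Banach fixed-point proof, but work in the induced scalar metric $\norm{d}$ and use Gelfand's spectral-radius formula to extract a geometric decay rate from the matrix $B$.

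First I would iterate the defining inequality \eqref{eq:contraction_gen}: a straightforward induction, using that $B \ge 0$ preserves the coordinate-wise order, gives
\[
d(T^n x, T^n y) \le B^n\, d(x,y) \quad \text{for all } n \ge 0.
\]
Next I would invoke Gelfand's formula $\rho(B)=\lim_{n\to\infty}\norm{B^n}^{1/n}$: since $\rho(B)<1$, for any $\beta\in(\rho(B),1)$ there is a constant $C>0$ such that $\norm{B^n}\le C\beta^n$ for every $n$. Combined with the monotonicity of the supremum norm noted in the excerpt, this yields
\[
\norm{d}(T^n x, T^n y) = \norm{d(T^n x, T^n y)} \le \norm{B^n}\,\norm{d(x,y)} \le C\beta^n \norm{d}(x,y).
\]

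With this geometric bound in hand, the rest of the argument is routine. I would fix $x^0\in X$, set $x^n=T^n x^0$, and show $(x^n)$ is Cauchy in $(X,\norm{d})$ by a telescoping estimate: for $m>n$,
\[
\norm{d}(x^n,x^m)\le \sum_{k=n}^{m-1}\norm{d}(x^k,x^{k+1})\le \Big(\sum_{k=n}^{m-1}C\beta^k\Big)\norm{d}(x^0,Tx^0),
\]
which tends to $0$ as $n\to\infty$. Completeness of $(X,\norm{d})$ produces a limit $x^*$. Continuity of $T$ is immediate from \eqref{eq:contraction_gen} and the monotonicity of $\norm{\cdot}$, so passing to the limit in $x^{n+1}=Tx^n$ gives $Tx^*=x^*$. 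Uniqueness follows by applying the iterated inequality to two fixed points $x^*, y^*$: $d(x^*,y^*)\le B^n d(x^*,y^*)$ for every $n$, and $B^n\to 0$ (a consequence of $\rho(B)<1$) forces $d(x^*,y^*)=0$.

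For the convergence rate, I would telescope in vector form: summing $d(T^{n+k}x^0, T^{n+k+1}x^0)\le B^{n+k}d(x^0,Tx^0)$ over $k\ge 0$ gives, coordinate-wise,
\[
d(T^n x^0, x^*)\le B^n(I-B)^{-1} d(x^0,Tx^0),
\]
where $(I-B)^{-1}=\sum_{k=0}^\infty B^k$ is well defined because $\rho(B)<1$. Taking supremum norms and using $\norm{B^n}\le C\beta^n$ delivers the $O(\beta^n)$ bound for every $\beta\in(\rho(B),1)$.

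The only genuinely non-trivial ingredient is the passage from $\rho(B)<1$ to a geometric bound on $\norm{B^n}$; once Gelfand's formula is cited this reduces Perov's theorem to essentially the Banach argument, with vector inequalities replacing scalar ones. A minor care point is that all the manipulations must respect the coordinate-wise order so that applying $\norm{\cdot}$ at the end preserves the inequalities; this is exactly what the monotonicity of the supremum norm, emphasized before the statement, is designed to handle.
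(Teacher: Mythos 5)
Your proposal is correct and follows essentially the same route as the paper: iterate \eqref{eq:contraction_gen}, control $\norm{B^n}$ via Gelfand's spectral radius formula, run the Banach Cauchy-sequence argument in $(X,\norm{d})$, and get uniqueness and the $O(\beta^n)$ rate from $d(T^nx,T^ny)\le B^n d(x,y)$. The only (cosmetic) differences are that you obtain the Cauchy estimate by telescoping a geometric series directly, whereas the paper first proves boundedness of the orbit and uses $d(x^m,x^n)\le B^n d(x^{m-n},x^0)$, and you derive the error bound by telescoping to the limit rather than comparing $T^nx^0$ with $T^nx^*$ directly.
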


Although the proof of Theorem \ref{thm:gen_contraction} is elementary, because it is not easy to find in English, I present it in Appendix \ref{sec:proof}.

As in \cite{blackwell1965}, we can derive a simple sufficient condition for the generalized contraction property \eqref{eq:contraction_gen}. Let $\Omega$ be a nonempty set. For each $i=1,\dots,I$, suppose that $X_i\subset \R^\Omega$ is a subset of all real functions $f_i:\Omega\to \R$ that is a complete metric space with respect to the sup metric
\begin{equation*}
d_i(f_i,g_i)\coloneqq \sup_{x\in\Omega}\abs{f_i(x)-g_i(x)}.
\end{equation*}
Letting $X=\prod_{i=1}^IX_i$ and $d(f,g)=(\dotsc,d_i(f_i,g_i),\dotsc)\in \R_+^I$ for $f=(f_1,\dots,f_I)\in X$ and $g=(g_1,\dots,g_I)\in X$, then $(X,d)$ becomes a complete vector-valued metric space. Furthermore, $X\subset (\R^I)^\Omega$ is partially ordered by letting $f\le g$ if $f_i(x)\le g_i(x)$ for all $i$ and $x\in \Omega$.

\begin{thm}\label{thm:suffcon}
Let $T:X\to X$ be a self map with the following properties:
\begin{enumerate}
\item (Monotonicity) If $f\le g$, then $Tf\le Tg$.
\item (Discounting) There exists a nonnegative matrix $B$ with $\rho(B)<1$ such that for any $f\in X$ and $c\in \R_+^I$, we have $f+c\in X$ and
\begin{equation}
T(f+c)\le Tf+Bc.\label{eq:discounting}
\end{equation}
\end{enumerate}
Then $T$ is a generalized contraction with coefficient matrix $B$.
\end{thm}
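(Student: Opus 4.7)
The plan is to mimic Blackwell's classical argument coordinate-by-coordinate, using the vector-valued distance itself as the constant shift that bounds the gap between $f$ and $g$.

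First, given $f,g\in X$, I would set $c\coloneqq d(f,g)\in\R_+^I$, so that $c_i=\sup_{x\in\Omega}\abs{f_i(x)-g_i(x)}$. By the definition of the sup metric, this constant vector dominates the pointwise differences: for every $i$ and every $x\in\Omega$, $f_i(x)\le g_i(x)+c_i$ and $g_i(x)\le f_i(x)+c_i$. Interpreting $c$ as a constant-valued element of $X$ (permissible by the second part of the discounting hypothesis, which guarantees $g+c,f+c\in X$), this reads as the partial-order inequalities $f\le g+c$ and $g\le f+c$ in $X$.

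Next I would apply the two hypotheses in sequence. Monotonicity gives $Tf\le T(g+c)$ and $Tg\le T(f+c)$. The discounting property \eqref{eq:discounting} then yields $T(g+c)\le Tg+Bc$ and $T(f+c)\le Tf+Bc$. Chaining these produces the two-sided bound
\begin{equation*}
-Bc\le Tf-Tg\le Bc
\end{equation*}
in $X$, meaning that for every coordinate $i$ and every $x\in\Omega$, $\abs{(Tf)_i(x)-(Tg)_i(x)}\le (Bc)_i$. Here I use crucially that $B\ge 0$ and $c\ge 0$, so $Bc\in\R_+^I$ and the componentwise bounds are genuine.

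Finally, taking the supremum over $x\in\Omega$ on the left gives $d_i(Tf,Tg)\le (Bc)_i=(Bd(f,g))_i$ for each $i$, i.e., $d(Tf,Tg)\le B\,d(f,g)$, which is exactly \eqref{eq:contraction_gen}. Since $\rho(B)<1$ by assumption, $T$ is a generalized contraction with coefficient matrix $B$ in the sense of Definition \ref{defn:gen_contraction}. I do not anticipate any serious obstacle: the only point requiring care is that $c$ must be interpreted as a constant element of $X$ (covered by the stated closure $f+c\in X$) and that $Bc\ge 0$ (covered by $B\ge 0$), both of which are built into the hypotheses.
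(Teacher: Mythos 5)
Your proposal is correct and is essentially identical to the paper's own proof: both set $c=d(f,g)$, apply monotonicity followed by the discounting inequality to get the two-sided bound $\abs{(Tf)_i(x)-(Tg)_i(x)}\le (Bc)_i$ (the paper obtains the second direction by ``interchanging the roles of $f,g$,'' which is what you write out explicitly), and then take the supremum over $x\in\Omega$. No gaps.
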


\begin{proof}
Take any $f,g\in X$. Then for each $i$ and $x\in \Omega$, we have
\begin{equation*}
f_i(x)\le g_i(x)+\abs{f_i(x)-g_i(x)}\le g_i(x)+d_i(f_i,g_i).
\end{equation*}
Letting $c_i=d_i(f_i,g_i)\in [0,\infty)$ and $c=(c_1,\dots,c_I)\in \R_+^I$, noting that $T$ is monotone and applying \eqref{eq:discounting}, we obtain
\begin{equation*}
(Tf)_i(x)\le (T(g+c))_i(x)\le (Tg)_i(x)+(Bc)_i,
\end{equation*}
so $(Tf)_i(x)-(Tg)_i(x)\le (Bc)_i$. Interchanging the roles of $f,g$, we obtain $(Tg)_i(x)-(Tf)_i(x)\le (Bc)_i$. Therefore $\abs{(Tf)_i(x)-(Tg)_i(x)}\le (Bc)_i$. Taking the supremum over $x\in \Omega$, we obtain $d_i((Tf)_i,(Tg)_i)\le (Bc)_i$, which is the generalized contraction condition \eqref{eq:contraction_gen}.
\end{proof}

\section{Dynamic programming with stochastic discounting}

This section applies \citeauthor{Perov1964}'s Contraction Mapping Theorem (Theorem \ref{thm:gen_contraction}) to solve dynamic programming problems with state-dependent discounting when the exogenous shocks are driven by a finite state Markov chain, which simplifies some of the arguments in \cite{StachurskiZhang2021}. We consider three applications, an abstract dynamic programming problem, an asset pricing model, and an optimal savings problem.

\subsection{Abstract dynamic programming with bounded utility}\label{subsec:DP}

I first consider an abstract model in which the utility function is bounded as in the classical theory of \cite{blackwell1965}.

Let $I=\set{1,\dots,I}$ be a finite set and $\set{i_t}_{t=0}^\infty$ be a Markov chain taking values in $I$ with transition probability matrix $P=(p_{ij})$. Consider the following dynamic programming problem in a Markovian environment. Let $X,Y$ be nonempty sets. At each stage $t=0,1,\dots$, given the exogenous state $i_t\in I$ and endogenous state variable $x_t\in X$, the decision maker chooses the control variable $y_t\in \Gamma_{i_t}(x_t)$, where $\Gamma_i:X\twoheadrightarrow Y$ is a correspondence with $\Gamma_i(x)\neq \emptyset$. Given the exogenous state $i_t$, endogenous state $x_t$, and control $y_t$, the decision maker receives the flow utility $u_{i_t}(x_t,y_t)$ and the next period's state is determined by the law of motion $x_{t+1}=g_{i_ti_{t+1}}(x_t,y_t)$, where $u_i:X\times Y\to [-\infty,\infty)$ and $g_{ij}:X\times Y\to X$. Conditional on transitioning from state $i_t=i$ to $i_{t+1}=j$, the decision maker discounts the next period's flow utility using the discount factor $\beta_{ij}\ge 0$.

Mathematically, the problem is
\begin{align}
&\maximize && \E_0\sum_{t=0}^\infty \left(\prod_{s=0}^t\beta_{i_{s-1}i_s}\right)u_{i_t}(x_t,y_t) \label{eq:dpinf.prob}\\
&\st && (\forall t) y_t\in \Gamma_{i_t}(x_t),~x_{t+1}=g_{i_ti_{t+1}}(x_t,y_t), \notag \\
&&& \text{$x_0\in X$ and $i_0\in I$ given}, \notag 
\end{align}
where $\beta_{i_{-1}i_0}\equiv 1$ and $\Pr(i_{t+1}=j\mid i_t=i)=p_{ij}$.

For a function $V:X\to \R^I$, define the Bellman operator $T$ by
\begin{equation}
(TV)_i(x)=\sup_{y\in \Gamma_i(x)}\set{u_i(x,y)+\sum_{j=1}^Ip_{ij}\beta_{ij}V_j(g_{ij}(x,y))}.\label{eq:dpinf.TV}
\end{equation}
We say that $V^*:X\to \R^I$ satisfies the Bellman equation if $V^*$ is a fixed point of $T$, so $V^*=TV^*$. Standard results \cite{blackwell1965} show that if $\beta_{ij}\equiv \beta<1$ is constant and each $u_i$ is bounded, then $T$ is a contraction mapping (with modulus $\beta$) on the space $(bX)^I$ of bounded functions from $X$ to $\R^I$ (and thus has a unique fixed point $V^*\in (bX)^I$), and that $V^*_{i_0}(x_0)$ is the supremum value of the dynamic programming problem \eqref{eq:dpinf.prob}.

The following theorem generalizes the theory of stochastic dynamic programming to the case with state-dependent discounting.

\begin{thm}\label{thm:dpinf.vstar}
Let $X,Y$ be nonempty sets, $u_i\in b(X\times Y)$, $\Gamma_i:X\twoheadrightarrow Y$ be such that $\Gamma_i(x)\neq\emptyset$ for all $x\in X$, and $g_{ij}:X\times Y\to X$. Define the matrix $B=(p_{ij}\beta_{ij})$ and assume $\rho(B)<1$. Then the followings are true.
\begin{enumerate}
\item $T:(bX)^I\to (bX)^I$ defined by \eqref{eq:dpinf.TV} is a generalized contraction with coefficient matrix $B$.
\item Letting $V^*\in (bX)^I$ be the unique fixed point of $T$, $V^*_{i_0}(x_0)$ is the supremum value of the dynamic programming problem \eqref{eq:dpinf.prob}.
\end{enumerate}
\end{thm}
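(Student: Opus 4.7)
The plan for part (i) is to invoke Theorem \ref{thm:suffcon} (the Blackwell-style sufficient condition) applied to the Bellman operator $T$ on $(bX)^I$, equipped with the product sup metric $d(V,W)=(\sup_x \abs{V_i(x)-W_i(x)})_{i=1}^I$. First I would check that $T$ sends $(bX)^I$ into itself: since $u_i$ and each $V_j$ are bounded, the bracketed expression in \eqref{eq:dpinf.TV} is bounded uniformly in $y\in \Gamma_i(x)$, so the supremum is a bounded function of $x$. Monotonicity is immediate because every coefficient $p_{ij}\beta_{ij}$ is nonnegative, and the discounting hypothesis \eqref{eq:discounting} reduces to the direct one-line computation $(T(V+c))_i(x) = (TV)_i(x) + \sum_j p_{ij}\beta_{ij}c_j = (TV)_i(x) + (Bc)_i$ for any $c \in \R_+^I$, which in fact holds with equality. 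Theorem \ref{thm:suffcon} then yields the generalized contraction property and Theorem \ref{thm:gen_contraction} the existence and uniqueness of the fixed point $V^* \in (bX)^I$.

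For part (ii) I would run the standard verification argument in two halves, with $\rho(B)<1$ playing the role that $\beta<1$ plays in the classical theory. Writing $D_t \coloneqq \prod_{s=0}^t \beta_{i_{s-1}i_s}$ (so $D_0=1$), for any feasible plan $\set{y_t}$ the Bellman equation yields $V^*_{i_t}(x_t) \ge u_{i_t}(x_t,y_t) + \E_t[\beta_{i_t i_{t+1}} V^*_{i_{t+1}}(x_{t+1})]$ along the path. Multiplying by $D_t$, taking unconditional expectations, and iterating from $t=0$ to $T$ produces
\begin{equation*}
V^*_{i_0}(x_0) \ge \E\sum_{t=0}^T D_t u_{i_t}(x_t,y_t) + \E[D_{T+1} V^*_{i_{T+1}}(x_{T+1})].
\end{equation*}
The cornerstone of this direction is the transversality estimate $\E[D_{T+1}V^*_{i_{T+1}}(x_{T+1})]\to 0$. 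A short induction on $T$ gives the identity $\E[D_T \mathbf{1}\set{i_T=j}\mid i_0] = (B^T)_{i_0 j}$, so with $v^*_j \coloneqq \sup_x \abs{V^*_j(x)}<\infty$ the residual is bounded in absolute value by $(B^{T+1} v^*)_{i_0}$, which vanishes as $T\to\infty$ because $\rho(B)<1$ forces $B^n\to 0$. Passing to the limit gives $V^*_{i_0}(x_0) \ge \E\sum_t D_t u_{i_t}(x_t,y_t)$ for every feasible plan.

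For the reverse inequality I would use $\eta$-optimal stationary Markov selectors. Since $\rho(B)<1$, the Neumann series $(I-B)^{-1}=\sum_{t\ge 0}B^t$ exists with nonnegative entries; given $\epsilon>0$ I would pick $\eta \in \R_+^I$ small enough that $((I-B)^{-1}\eta)_{i_0}<\epsilon$, and for each $(i,x)$ choose $y^*_i(x) \in \Gamma_i(x)$ attaining the supremum in \eqref{eq:dpinf.TV} within $\eta_i$. Running the same iteration on the plan generated by $y^*$ reverses the Bellman inequality up to an accumulated error $\sum_{t=0}^T (B^t \eta)_{i_0} \le ((I-B)^{-1}\eta)_{i_0} < \epsilon$; the transversality term again vanishes, so the value of this feasible plan exceeds $V^*_{i_0}(x_0)-\epsilon$. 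Since $\epsilon$ is arbitrary, $V^*_{i_0}(x_0)$ is the supremum value.

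The only genuinely delicate step is the transversality bound: everything hinges on the identity $\E[D_T\mathbf{1}\set{i_T=j}\mid i_0]=(B^T)_{i_0 j}$ together with $B^n\to 0$. This is precisely where Perov's framework is needed—the matrix $B=(p_{ij}\beta_{ij})$ packages the Markov transition probabilities and the stochastic discount factors into a single operator, and the spectral-radius condition is exactly what restores the geometric decay that $\beta^T$ provides in the classical theory.
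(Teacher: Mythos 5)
Your part (i) is exactly the paper's argument: check $T$ maps $(bX)^I$ into itself, note monotonicity, verify the discounting condition \eqref{eq:discounting} with equality, and invoke Theorem \ref{thm:suffcon}. For part (ii) the paper simply says the proof is ``similar to standard results'' and omits it; your verification argument---the identity $\E[D_T\mathbf{1}\set{i_T=j}\mid i_0]=(B^T)_{i_0j}$ giving the transversality bound $(B^{T+1}v^*)_{i_0}\to 0$, plus $\eta$-optimal selectors with the Neumann-series error $((I-B)^{-1}\eta)_{i_0}$---is a correct and complete way to carry out exactly the standard argument the author has in mind, with $B^T$ playing the role of $\beta^T$.
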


\begin{proof}
We only show the first claim as the second is similar to standard results.

Since by assumption each $u_i$ is bounded, if $V\in (bX)^I$, then clearly the right-hand side of \eqref{eq:dpinf.TV} is bounded. Therefore $T:(bX)^I\to (bX)^I$. To show that $T$ is a generalized contraction, it suffices to verify the conditions in Theorem \ref{thm:suffcon}. Monotonicity of $T$ is trivial. To show the discounting property \eqref{eq:discounting}, take any $V\in (bX)^I$ and $c\in \R_+^I$. Then $V+c\in (bX)^I$, and it follows from \eqref{eq:dpinf.TV} that
\begin{align*}
(T(V+c))_i(x)&=\sup_{y\in \Gamma_i(x)}\set{u_i(x,y)+\sum_{j=1}^Ip_{ij}\beta_{ij}(V_j(g_{ij}(x,y))+c_j)}\\
&=(TV)_i(x)+\sum_{j=1}^Ip_{ij}\beta_{ij}c_j=(TV)_i(x)+(Bc)_i.
\end{align*}
Since by assumption $\rho(B)<1$, it follows from Theorem \ref{thm:suffcon} that $T$ is a generalized contraction with coefficient matrix $B$.
\end{proof}

\subsection{Asset pricing}\label{subsec:AP}

Consider a financial asset that trades at price $P_t$ and pays dividend $D_t>0$ at time $t$. Standard results in asset pricing \citep{HarrisonKreps1979} show that the absence of arbitrage implies the existence of a stochastic discount factor $M_t$ such that
\begin{equation}
P_t=\E_t[M_{t+1}(P_{t+1}+D_{t+1})].\label{eq:noarbitrage}
\end{equation}

Consider a simple asset pricing model in which $I=\set{1,\dots,I}$ is the finite set of exogenous states and $\set{i_t}_{t=0}^\infty$ is a Markov chain taking values in $I$ with irreducible transition probability matrix $P=(p_{ij})$. Let $m_{ij}>0$ be the stochastic discount factor conditional on transitioning from state $i$ to $j$, that is, $M_{t+1}=m_{ij}$ if $i_t=i$ and $i_{t+1}=j$. Let the dividend growth $G_t\coloneqq D_t/D_{t-1}$ take value $G_{ij}>0$ conditional on transitioning from state $i$ to $j$. Then under what condition does the asset have a finite price-dividend ratio?

Let $v_i=P_t/D_t$ be the price-dividend ratio in state $i$. Dividing both sides of \eqref{eq:noarbitrage} by $D_t>0$, we obtain
\begin{align*}
v_i&=\frac{P_t}{D_t}=\E_t\left[M_{t+1}\frac{D_{t+1}}{D_t}\left(\frac{P_{t+1}}{D_{t+1}}+1\right)\right]\\
&=\sum_{j=1}^Jp_{ij}m_{ij}G_{ij}(v_j+1).
\end{align*}
Defining the vector $v=(v_1,\dots,v_I)$ and matrix $B=(p_{ij}m_{ij}G_{ij})$, the above equation can be written as
\begin{equation*}
v=Tv\coloneqq Bv+B1,
\end{equation*}
where $1=(1,\dots,1)$ is the vector of ones. We can now characterize the price-dividend ratios as follows.

\begin{prop}\label{prop:assetpricing}
The asset has finite price-dividend ratios if and only if $\rho(B)<1$, in which case
\begin{equation}
v=(I-B)^{-1}B1.\label{eq:pdratio}
\end{equation}
\end{prop}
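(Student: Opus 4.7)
The plan is to split the proposition into sufficiency and necessity, handling the former by \citeauthor{Perov1964}'s contraction theorem and the latter by Perron-Frobenius. I will treat the equation $v=Bv+B\mathbf{1}$ as the fixed-point problem $v=Tv$ for the affine self map $Tv\coloneqq Bv+B\mathbf{1}$ on the complete vector-valued metric space $(\R^I,d)$, where $d(v,v')\coloneqq (\abs{v_i-v'_i})_{i=1}^I$.

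For sufficiency, assume $\rho(B)<1$. Since $B=(b_{ij})\ge 0$ entrywise, for each $i$ I have $\abs{(B(v-v'))_i}\le \sum_j b_{ij}\abs{v_j-v'_j}=(B\,d(v,v'))_i$, so $d(Tv,Tv')\le B\,d(v,v')$ and $T$ is a generalized contraction with coefficient matrix $B$. Theorem \ref{thm:gen_contraction} then provides a unique fixed point $v^*$; because $\rho(B)<1$ makes $I-B$ invertible, solving $v^*=Bv^*+B\mathbf{1}$ gives $v^*=(I-B)^{-1}B\mathbf{1}$, and iterating $T$ from $v^0=0$ produces $T^n0=\sum_{k=1}^n B^k\mathbf{1}\ge 0$, so the limit $v^*$ is a bona fide nonnegative, finite vector of price-dividend ratios.

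For necessity, suppose a finite $v\ge 0$ satisfies $v=Bv+B\mathbf{1}$. Iterating gives $v=B^n v+\sum_{k=1}^n B^k\mathbf{1}$ for every $n\ge 1$, and discarding the nonnegative term $B^n v$ yields the key inequality $v\ge \sum_{k=1}^n B^k\mathbf{1}$. Because $P$ is irreducible and all $m_{ij},G_{ij}>0$, the matrix $B$ inherits $P$'s sparsity pattern and is itself irreducible; Perron-Frobenius then supplies a strictly positive left eigenvector $w$ with $w^\top B=\rho(B)\,w^\top$. Applying $w^\top$ to the inequality gives $w^\top v\ge (w^\top\mathbf{1})\sum_{k=1}^n \rho(B)^k$, whose right-hand side diverges as $n\to\infty$ whenever $\rho(B)\ge 1$, contradicting the finiteness of $w^\top v$. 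Hence $\rho(B)<1$.

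The delicate step is this necessity argument: the fixed-point equation alone does not pin down $\rho(B)$, so one must extract a positive left eigenvector via Perron-Frobenius and use it to manufacture a scalar divergence. Irreducibility of $B$, inherited from the irreducibility of $P$ together with the positivity of the discount and growth factors, is exactly the ingredient that makes this step go through; the sufficiency half is little more than a one-line appeal to the generalized contraction principle developed in Section 2.
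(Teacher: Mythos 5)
Your proof is correct and follows essentially the same route as the paper: sufficiency via the generalized contraction principle (you verify $d(Tv,Tv')\le B\,d(v,v')$ directly where the paper invokes Theorem \ref{thm:suffcon}, an immaterial difference), and necessity by pairing the fixed-point equation with a strictly positive left Perron vector of the irreducible matrix $B$. Your necessity step iterates and lets a geometric series diverge, whereas the paper gets the contradiction in one algebraic line from $(1-\rho(B))u'v=u'B\mathbf{1}>0$, but the key ingredient is identical.
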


\begin{proof}
Suppose $\rho(B)<1$. Applying Theorem \ref{thm:suffcon}, we can see that $T:\R_+^I\to \R_+^I$ is a generalized contraction with the unique fixed point given by \eqref{eq:pdratio}.

Suppose next that $\rho(B)\ge 1$ and the vector $v\in \R_+^I$ of price-dividend ratios is finite. Multiplying the left Perron vector $u>0$ of $B$ from left to $v=Tv$, it follows from $B1\gg 0$ that
\begin{equation*}
u'v=u'(Bv+B1)=\rho(B)u'v+u'B1\implies 0\ge (1-\rho(B))u'v=u'B1>0,
\end{equation*}
which is a contradiction.
\end{proof}

The asset pricing model is simple enough that Proposition \ref{prop:assetpricing} can be easily proved without appealing to the generalized contraction mapping theorem. However, using Theorem \ref{thm:gen_contraction} clarifies the argument.

\subsection{Optimal savings with rate-of-return risk}\label{subsec:os}

As yet another application and a more concrete example, we consider an optimal savings problem with rate-of-return risk. This problem was recently solved by \cite{MaStachurskiToda2020JET,MaToda2021JET} using the Euler equation approach of \cite{LiStachurski2014}. The proof technique in \cite{MaStachurskiToda2020JET,MaToda2021JET} is to show that some iterate of the time iteration operator $T$ is a contraction. Here I show that $T$ is a generalized contraction in the sense of Definition \ref{defn:gen_contraction}, which significantly simplifies the proof by applying Theorem \ref{thm:gen_contraction}.

Here we briefly describe the problem following \cite{MaToda2021JET}; the reader is referred to \cite{LiStachurski2014,MaStachurskiToda2020JET,MaToda2021JET} for more details. Time is discrete and denoted by $t=0, 1, 2,\dotsc$ Let $a_t\ge 0$ be the financial wealth of the agent at the beginning of period $t$. The agent chooses consumption $c_t\ge 0$ and saves the remaining wealth $a_t-c_t$. The period utility function is denoted by $u$. The discount factor between $t-1$ and $t$, gross return on wealth between $t-1$ and $t$, and income at $t$ are denoted by $\beta_t,R_t,Y_t\ge 0$, where we normalize $\beta_0\equiv 1$. We suppose that these variables are Markov-modulated in the following sense: letting $I=\set{1,\dots,I}$ be a finite set and $\set{i_t}_{t=0}^\infty$ be a Markov chain taking values in $I$ with transition probability matrix $P=(p_{ij})$, we have
\begin{equation}
\beta_t=\beta(i_{t-1},i_t,\zeta_t),\quad R_t=R(i_{t-1},i_t,\zeta_t), \quad
Y_t=Y(i_{t-1},i_t,\zeta_t), \label{eq:betaRY}
\end{equation}
where $\zeta_t$ is an \iid random variable and $\beta,R,Y$ are nonnegative measurable functions. Note that \eqref{eq:betaRY} implies that the discount factor, return on wealth, and income can all depend on the two most recent Markov states $(i_{t-1},i_t)$ as well as the \iid shock $\zeta_t$. Given the initial wealth $a_0=a>0$ and state $i_0$, the agent's objective is to maximize the expected lifetime utility
\begin{equation*}
\E_{i_0}\sum_{t=0}^\infty \left(\prod_{s=0}^t \beta_s\right)u(c_t)
\end{equation*}
subject to the budget constraint
\begin{equation*}
(\forall t)~a_{t+1}=R_{t+1}(a_t-c_t)+Y_{t+1},
\end{equation*}
where consumption satisfies $0\le c_t\le a_t$ (no borrowing). We say that $c(a,i)$ is the \emph{consumption function} if $c_t=c(a_t,i_t)$ solves the optimal savings problem just described.

The idea of Euler equation approach \cite{LiStachurski2014,MaStachurskiToda2020JET,MaToda2021JET} is to update a candidate consumption function using the Euler equation. Namely, let $\cC$ be a space of candidate consumption functions, and consider updating $c(a,i)$ by the unique number $\xi\in [0,a]$ satisfying the Euler equation (first-order condition)
\begin{equation}
u'(\xi)=\min\set{\max\set{\E_i \hat{\beta}\hat{R}u'(c(\hat{R}(a-\xi)+\hat{Y},\hat{i})),u'(a)},u'(0)}.\label{eq:xi}
\end{equation}
(Here variables with hats denote the values next period, for example $i=i_t$ and $\hat{i}=i_{t+1}$; the $\min$ and $\max$ operators take care of the possibility of corner solutions $\xi=0$ and $\xi=a$.) This updating rule defines the time iteration operator $T:\cC\to \cC$ through $(Tc)(a,i)=\xi$, and the unique fixed point of $T$ is the consumption function; see \cite[Section 2]{MaToda2021JET} for details.

In the discussion below, assume the following.
\begin{asmp}\label{asmp:os}
\begin{enumerate*}
\item\label{item:u} The utility function $u:[0,\infty)\to \R \cup \set{-\infty}$ is continuously differentiable on $(0,\infty)$ and $u'$ is positive and strictly decreasing on $(0,\infty)$;
\item\label{item:E} The matrix $B=(b_{ij})$ defined by $b_{ij}=p_{ij}\E[\beta(i,j,\xi)R(i,j,\xi)]$ is finite;
\item\label{item:rho} $\rho(B)<1$.
\end{enumerate*}
\end{asmp}

Assumptions \ref{asmp:os}\ref{item:u}\ref{item:E} guarantee that the time iteration operator $T$ is well-defined \cite[Lemma 1]{MaToda2021JET}. To apply Theorem \ref{thm:gen_contraction}, it is convenient to work with the space of marginal utility functions $f_i(a)=u'(c(a,i))$ instead of consumption functions $c(a,i)$. Thus let $X$ be the space of functions $f:(0,\infty)\to \R_+^I$ such that $a\mapsto f_i(a)$ is continuous, decreasing, and $\sup_{a\in (0,\infty)}\abs{f_i(a)-u'(a)}<\infty$. Define the vector-valued metric $d:X\times X\to \R_+^I$ by
\begin{equation*}
d_i(f,g)=\sup_{a\in (0,\infty)}\abs{f_i(a)-g_i(a)}.
\end{equation*}
Then it is easy to see that $(X,d)$ is a complete vector-valued metric space.

Define $\tilde{T}:X \to X$ by $(\tilde{T}f)_i(a)=u'((Tc)(a,i))$, where $c(a,i)=(u')^{-1}(f_i(a))$. Let us now show that $\tilde{T}$ is a generalized contraction. To this end, we apply the sufficient conditions in Theorem \ref{thm:suffcon}. The monotonicity of $\tilde{T}$ follows from the same argument as \cite[Lemma B.4]{MaStachurskiToda2020JET}. The discounting condition \eqref{eq:discounting} follows from \cite[Lemma 11]{MaToda2021JET}. Finally, by Assumption \ref{asmp:os}\ref{item:rho}, we have $\rho(B)<1$. Therefore $\tilde{T}$ is a generalized contraction, and $\tilde{T}$ has a unique fixed point $f\in X$. We can then recover the consumption function as $c(a,i)=(u')^{-1}(f_i(a))$.

\appendix

\section{Proof of Theorem \ref{thm:gen_contraction}}\label{sec:proof}

Let $\norm{\cdot}$ denote the supremum norm in $\R^I$  as well as the operator norm for $I\times I$ matrices induced by $\norm{\cdot}$. 

Take any $x^0\in X$ and define $x^n=T^nx^0$. Let us first show that $\set{x^n}\subset X$ is bounded in the metric space $(X,\norm{d})$. To see this, iterating \eqref{eq:contraction_gen}, we obtain
\begin{equation*}
d(x^k,x^{k-1})\le B^{k-1}d(x^1,x^0).
\end{equation*}
Summing this inequality over $k=1,\dots,n$ and using the triangle inequality \ref{item:triangle}, we obtain
\begin{equation*}
d(x^n,x^0)\le \sum_{k=1}^nd(x^k,x^{k-1})\le (I+B+\dots+B^{n-1})d(x^1,x^0)
\end{equation*}
for all $n$. Since $B\ge 0$ and $\rho(B)<1$ by assumption, $\sum_{k=1}^n B^{k-1}$ monotonically converges to the nonnegative matrix $\sum_{k=1}^\infty B^{k-1}=(I-B)^{-1}$. Therefore
\begin{equation*}
d(x^n,x^0)\le (I-B)^{-1}d(x^1,x^0)
\end{equation*}
for all $n$. Taking the supremum norm of both sides and using monotonicity, we obtain
\begin{equation*}
\norm{d}(x^n,x^0)=\norm{d(x^n,x^0)}\le \norm{(I-B)^{-1}d(x^1,x^0)}\eqqcolon M<\infty,
\end{equation*}
implying that the sequence $\set{x^n}$ is bounded.

Next let us show that $\set{x^n}$ is a Cauchy sequence in the complete metric space $(X,\norm{d})$ and hence convergent. If $m\ge n$, iterating \eqref{eq:contraction_gen} yields
\begin{equation*}
d(x^m,x^n)\le B^nd(x^{m-n},x^0).
\end{equation*}
Taking the supremum norm of both sides, using monotonicity, and noting that $\set{x^n}$ is bounded, we obtain
\begin{equation*}
\norm{d}(x^m,x^n)=\norm{d(x^m,x^n)}\le \norm{B^n}\norm{d(x^{m-n},x^0)}\le M\norm{B^n}.
\end{equation*}
By the Gelfand spectral radius formula \cite[Theorem 5.7.10]{HornJohnson2013}, we have $\norm{B^n}^{1/n}\to \rho(B)<1$ as $n\to\infty$. Therefore for any $\beta\in (\rho(B),1)$, there exists a constant $C>0$ such that $\norm{B^n}\le C\beta^n$ for all $n$, so $\norm{B^n}\to 0$ and $\set{x^n}$ is Cauchy in $(X,\norm{d})$. Therefore there exists $x^*\in X$ such that $\lim_{n\to\infty} x^n=x^*$.

Let us show that $x^*$ is the unique fixed point of $T$. By the triangle inequality \ref{item:triangle} and \eqref{eq:contraction_gen}, we obtain
\begin{align*}
d(Tx^*,x^*)&\le d(Tx^*,Tx^n)+d(Tx^n,x^*)\\
&\le Bd(x^*,x^n)+d(x^{n+1},x^*).
\end{align*}
Taking the supremum norm of both sides and using monotonicity, we obtain
\begin{equation*}
\norm{d(Tx^*,x^*)}\le \norm{B}\norm{d}(x^n,x^*)+\norm{d}(x^{n+1},x^*)\to 0
\end{equation*}
as $n\to\infty$ because $x^n\to x^*$ in $(X,\norm{d})$. Therefore $d(Tx^*,x^*)=0$ and hence $Tx^*=x^*$, so $x^*$ is a fixed point of $T$. If $x^*,y^*$ are two fixed points, then for any $n$ we have
\begin{equation*}
d(x^*,y^*)=d(T^nx^*,T^ny^*)\le B^nd(x^*,y^*).
\end{equation*}
Taking the supremum norm of both sides, we obtain
\begin{equation*}
\norm{d(x^*,y^*)}\le \norm{B^n}\norm{d(x^*,y^*)}\to 0
\end{equation*}
as $n\to\infty$, so $d(x^*,y^*)=0$ and $x^*=y^*$. Therefore the fixed point is unique.

Finally, for any $x^0$ and $x^n=T^nx^0$, we have
\begin{equation*}
d(x^n,x^*)=d(T^nx^0,T^nx^*)\le B^nd(x^0,x^*).
\end{equation*}
Taking the supremum norm of both sides, we obtain
\begin{equation*}
\norm{d(x^n,x^*)}\le \norm{B^n}\norm{d(x^0,x^*)} \le C\beta^n \norm{d(x^0,x^*)}\to 0,
\end{equation*}
and the approximation error $d(x^n,x^*)$ is $O(\beta^n)$. \qedsymbol


\end{document}